\begin{document}
\newtheorem{theorem}{Theorem}
\newtheorem{corollary}{Corollary}
\newtheorem{lemma}{Lemma}
\newtheorem{remark}{Remark}
\title{Adaptive Rate NOMA for  Cellular IoT Networks}
\allowdisplaybreaks\vspace{-4mm}
\author{G. Sreya,~\IEEEmembership{}
        S. Saigadha,~\IEEEmembership{}
        ~Praful D. Mankar,~\IEEEmembership{}
        Goutam Das, Harpreet S. Dhillon
\thanks{G. Sreya, S. Saigadha and G. Das are with IIT Kharagpur, India (Email: \{sreya.gopakumar, saigadha000\}@gmail.com and gdas@gssst.iitkgp.ac.in).
P. D. Mankar is with SPCRC, IIIT Hyderabad, India (Email:  praful.mankar@iiit.ac.in).  H. S. Dhillon is with Wireless@VT, Department of ECE, Virginia Tech, Blacksburg, VA 24061, USA (Email: hdhillon@vt.edu). The work of H. S. Dhillon was supported by the U.S. National Science Foundation under Grant CPS-1739642. 
}}
\maketitle\vspace{-40mm}
\begin{abstract}
Internet-of-Things (IoT) technology is envisioned to enable a variety of real-time applications by interconnecting billions of sensors/devices. 
These IoT devices rely on low-power wide-area wireless connectivity for transmitting, mostly fixed- but small-size, status updates of the random processes observed by them.  
Owing to their ubiquity, cellular networks are seen as a natural candidate for providing reliable wireless connectivity to IoT devices.
Given the massive number of IoT devices, enabling non-orthogonal multiple access (NOMA) for the mobile users and IoT devices is appealing in terms of the efficient utilization of spectrum compared to the orthogonal multiple access (OMA).
For instance, the uplink NOMA can also be configured such that the mobile users adapt their transmission rates depending upon the channel conditions while the IoT devices transmit at a fixed rate.
For this setting, we analyze the ergodic capacity of the mobile users and the mean local delay of IoT devices using stochastic geometry. Our analysis demonstrates that the aforementioned NOMA configuration provides better ergodic capacity for mobile users compared to OMA when delay constraint of IoT devices is strict. We also show that NOMA supports a larger packet size at IoT devices than OMA under the same delay constraint. 
\end{abstract}
\begin{IEEEkeywords}
Adaptive rate NOMA, cellular networks, ergodic rate, IoT networks,  mean local delay, stochastic geometry.
\end{IEEEkeywords}\vspace{-3mm}
\IEEEpeerreviewmaketitle
\section{Introduction}\vspace{-2mm}
The IoT networks provide a digital fabric interconnecting billions of wireless devices for exchanging application-specific information without any human intervention. 
Many IoT applications, such as smart cities and traffic surveillance, rely on the real-time processing of information received from a massive number of sensors/devices deployed over a large area.  The key research challenges for realizing such IoT applications are to facilitate flexible deployment, wide-area coverage, low power devices, and low device complexity. The cellular networks are seen as a natural candidate for providing wide coverage to IoT devices on a massive scale \cite{Dhillon16IoT}.    However, the low-cost IoT devices may not be capable of performing complex signal processing needed for the advanced antenna array communication techniques (such as millimeter communication). Besides, the  IoT devices may experience much higher pathloss if they are deployed in places like tunnels or basements or are simply located far away from the BSs. Thus, efficient link budget planning is also crucial for low-power IoT devices. For these reasons, the sub-6 GHz band is primarily being considered to support low power wide area (LPWA) links of the low-cost IoT devices \cite{ratasuk2016nb}. However, the sub-6 GHz band is  crowded with the existing mobile services. This motivates spectral resource sharing between IoT devices and mobile users \cite{Mischa_RRA}. 

Further, the  IoT devices are generally deployed to share  observations/measurements of some physical process in the form of fixed and small payloads at random intervals. As a result, the BSs require to  support small size data packet transmissions from a massive number of low-power IoT devices \cite{Dhillon14M2M, Dhillon13M2M}. 
In release 13, 3GPP LTE included enhanced machine type communications (eMTC) and narrowband IoT (NB-IoT) communication to offer narrowband LPWA links to IoT devices in the sub-6 GHz band \cite{Amitava_NBIoT,Xingqin_NBIoT}. 
On the other hand, non-orthogonal multiple access (NOMA) can be used as a viable alternative to improve spectral utilization as well as enable massive access in IoT networks \cite{Ding_NOMA}. In the  literature, the design of NOMA-based IoT networks is extensively investigated. For instance, \cite{NOMA_IoT_Shahini} presents NOMA-aided NB-IoT networks for enhanced connectivity, \cite{NOMA_IoT_Balevi} presents ALOHA-based NOMA scheme for scalable and energy-efficient deployment of IoT networks, and  \cite{NOMA_IoT_Wu} studies the performance of NOMA-based wireless powered IoT networks. However, most existing works on the design of NOMA-aided IoT networks are investigated in simplified settings, such as a single-cell system.  

Recently, stochastic geometry has emerged as a powerful tool
for modeling and analyzing a variety of large-scale wireless networks. However, works on the analysis of NOMA-aided IoT networks using stochastic geometry are relatively sparse, a few of which are briefly discussed below.  The authors of \cite{NOMA_IOT_SG_Moussa} analyze aggregators-assisted two-hop NOMA-enabled cellular IoT network by modeling the locations of IoT devices, aggregators and BSs as independent Poisson point processes
(PPPs). Therein,  aggregators are employed to relay the NOMA transmissions from the IoT devices to the BS. 
The authors of \cite{NOMA_IOT_SG_Zhou} analyze RF energy harvesting based cellular IoT networks under the PPP setting. The IoT devices first harvest energy using downlink signals and then perform the uplink data transmission using NOMA.   
While the existing works in this direction consider pairing of IoT devices for non-orthogonal access, NOMA can also offer an efficient solution to the co-existence of mobile users and IoT devices by pairing their transmissions in the same spectral resource, as considered in this paper.
The authors of \cite{NOMA_IoT_MobileUE_Mahmoud} analyze the throughput performance of NOMA-based uplink transmission of mobile users and IoT devices in cellular networks under the PPP setting. However, the authors apply random pairing (i.e., mobile user and IoT device are randomly selected for a cell), which undermines the NOMA performance gains. 

{ The authors of \cite{Ding2016} show that it is imperative to pair devices with distinctive link qualities for harnessing maximum performance gains from fixed-power NOMA.  
The authors of \cite{Zhiqiang2020} characterized the performance gain of NOMA over OMA, termed  the {\em large-scale near-far gain}, which is a result of the variation in link distances of NOMA users. 
Inspired by this,  we consider a new pairing scheme that selects a mobile user from  the {\em Johnson Mehl (JM) cells} \cite{parida2019stochastic} to ensure the mobile user with shorter link distance (i.e., good channel quality) is selected for pairing, as will be discussed shortly. In most cases, this approach will ensure distinctive link qualities of the mobile user and IoT device selected for pairing.}

\textit{Contributions:}  
This paper presents a new stochastic geometry-based analysis of uplink NOMA for the non-orthogonal transmission of mobile users and IoT devices in cellular networks with power control.  In particular, we consider adaptive rate NOMA wherein the mobile users adapt modulation and coding scheme (MCS) according to the time-varying channel and the IoT devices transmit fixed but small-size data packets. We assume that the locations of IoT devices, mobile users and BSs follow independent PPPs. Further, we consider mobile users with serving link distance below threshold $L$ for pairing to ensure the  distinct link quality criteria for harnessing the optimum NOMA performance gain \cite{Ding2016}.
As a result, the mobile user and IoT device  are selected for pairing from the   {\em Johnson Mehl (JM) cell} \cite{parida2019stochastic} and  {\em Poisson Voronoi (PV) cell}, respectively, corresponding to their associated BS.\footnote{This paper considers only a subset of the mobile users (from JM cells) for NOMA pairing, and the remaining mobile users (outside of JM cells) are assumed to be served in a conventional manner. The analysis for users outside the JM cell can be followed from \cite{haenggi2015meta} with small improvisations.}
For this setup, we first derive the moments of the {\em meta distribution} \cite{haenggi2015meta} for both mobile users and IoT devices. Next, we use these results to characterize the achievable ergodic capacity for the typical mobile user and the mean local delay observed by the typical IoT device. 
Finally, our numerical results validate the analytical findings and demonstrate  that adaptive rate NOMA  is more spectrally-efficient than OMA when the delay constraint of IoT devices is strict. \vspace{-3mm}
\section{System Model}
We assume that the locations of BSs, mobile users and IoT devices form independent homogeneous PPPs $\Phi_{\rm b}^\prime$, $\Phi_{\rm m}$ and $\Phi_{\rm t}$ of densities $\lambda_{\rm b}$, $\lambda_{\rm m}$ and $\lambda_{\rm t}$, respectively, on $\mathbb{R}^2$. 
We present the uplink analysis for the typical BS placed the origin $o$ by adding an additional point at $o$ to $\Phi_{\rm b}^\prime$. Let  $\Phi_{\rm b}=\Phi_{\rm b}^\prime\cup\{o\}$.  For more details on this typical cell viewpoint, please refer to  \cite{mankar2020typicalcell}.
Mobile users and IoT devices are assumed to associate with their nearest BSs. Thus, the mobile users and IoT devices associated with BS at $\mathbf{x}$ must lie within {\em Poisson Voronoi (PV) cell} which is $V_\mathbf{x} = \{\mathbf{y} \in \mathbb{R}^2 : \left\| \mathbf{x}-\mathbf{y} 
      \right\| \leq  \left\| \mathbf{z}-\mathbf{y} \right\|, \mathbf{z} \in \Phi_{\rm b}\}$.


It is important to pair devices with distinct link qualities to achieve NOMA benefits \cite{Ding2016}.
Therefore, we pair mobile users with serving link distances shorter than $L$ with the IoT devices. 
This ensures that the mobile users experiencing  good channel quality are involved in the NOMA pairing. 
 Thus, the  NOMA pair associated with a BS at $\mathbf{x}$ includes the mobile user within the JM cell $\mathcal{V}_\mathbf{x}=\mathcal{B}_\mathbf{x}(L)\cap V_\mathbf{x}$ \cite{parida2019stochastic} and the IoT device within  the PV cell $V_\mathbf{x}$, where $\mathcal{B}_\mathbf{x}(L)$ is a ball of radius $L$ centered at $\mathbf{x}$. Note that $L$ controls the fraction of mobile users available for pairing. This fraction is equal to $\mathcal{A}_{\rm L}=1- \exp(-\pi\lambda_{\rm b} L^2)$ \cite{mankar2020distance}, which clearly increases with $L$.

In the proposed uplink NOMA, we consider that the BS first decodes the mobile users' signal  in the presence of intra-cell interference from its paired IoT device. Next, the BS applies  successive interference cancellation (SIC) technique to remove the intra-cell interference to the IoT device from the mobile user. After that, it decodes the IoT devices' signal. { Thus, we effectively consider multi-user detection by SIC.}

{  We assume that each mobile user has perfect knowledge of its uplink signal to interference ratio (${\rm SIR}$) and can employ infinitely many MCS levels such that there is an MCS level that achieves Shannon capacity with an arbitrarily small ${\rm BER}$ for a realized ${\rm SIR}$. Under this ${\rm SIR}$ adaptive MCS selection, the transmission rate of the mobile user is $\log_2(1+\beta_{\rm m})$ when the realized ${\rm SIR}$ is $\beta_{\rm m}$.}
This is also beneficial to improve the rate of successful transmission for the IoT devices as the BS will always be able to successfully perform the SIC operation because of the mobile user's channel adaptive transmission strategy. 
We term this scheme the {\em adaptive rate NOMA}. 
The IoT devices are assumed to transmit at a fixed rate as they may not be complex enough to transmit with adaptive MCS.

This paper assumes that each BS employs NOMA transmission of IoT devices and mobile users (from JM cells) over the same spectral band and uses different spectral band for the transmission of mobile users lying outside of the JM cells.    We assume the standard power law path-loss model with exponent $\alpha$, and consider that both mobile users and IoT devices transmit using a distance-proportional fractional power control scheme. We use subscript $i\in\{{\rm m}, {\rm t}\}$ for denoting the mobile user (i.e., $i={\rm m}$) and the IoT device (i.e., $i={\rm t}$). Thus, the transmit power of device $i$ is $\rho_{i}R_i^{\alpha\epsilon_i}$ where $R_i$, $\rho_i$ and $\epsilon_i \in [0, 1]$  denote  its  serving link  distance, baseline transmit power  and power control fraction, respectively.
 Let $\Psi_{\rm t}$ and $\Psi_{\rm m}$ denote the point processes of the inter-cell interfering IoT devices and mobile users, respectively. Let $R_{\mathbf{x}_{i}}$ and $D_{\mathbf{x}_i}$ denote the distances of device $i$ located at $\mathbf{x}$ from its serving BS and the   typical BS placed at $o$. We assume independent Rayleigh fading over all links. The received ${\rm SIR}$ at the typical BS at $o$ from the mobile user in $\mathcal{V}_o$ is 
\begin{align}
&{\rm SIR_{\rm m}}= \dfrac{\rho_{\rm m} h_{\rm m}R_{\rm m}^{\alpha \left(\epsilon_{\rm m} -1\right)}}{\rho_{\rm t}h_{\rm t}R_{\rm t}^{\alpha \left(\epsilon_{\rm t} -1\right) }+I_{\rm m} +I_{\rm t}},\label{SIRm}
\end{align}
and the ${\rm SIR}$ received at the typical BS at $o$ from the IoT device in $V_o$ after removing the intra-cell interference via SIC is 
\begin{align}
&{\rm SIR_{\rm t}}=\frac{\rho_{\rm t}h_{\rm t}R_{\rm t}^{\alpha (\epsilon_{\rm t} -1)}}{I_{\rm m} +I_{\rm t}}, \text{~~where}\label{SIRt}
\end{align}   
{\small\begin{align*}
&{I_{\rm m}}= \sum\nolimits_{\mathbf{x}\in\Psi_{\rm m}} \rho_{\rm m}h_{\mathbf{x}_{\rm m}}R_{\mathbf{x}_{\rm m}}^{\alpha\epsilon_{\rm m}}D_{\mathbf{x}_{\rm m}}^{-\alpha}~
\text{and}~{I_{\rm t}} = \sum\nolimits_{\mathbf{x}\in\Psi_{\rm t}} \rho_{\rm t}h_{\mathbf{x}_{\rm t}}R_{\mathbf{x}_{\rm t}}^{\alpha\epsilon_{\rm t}}D_{\mathbf{x}_{\rm t}}^{-\alpha},
\end{align*}}
  where $h_i\sim \exp(1)$ and $h_{\mathbf x_i}\sim\exp(1)$ are the small scale fading gains of intended device and interfering device at $\mathbf{x}$, respectively, for $i\in\{{\rm m},{\rm t}\}$.

The conditional success probability (conditioned on the locations of the mobile user $\mathbf{y}_{\rm m}$, IoT device $\mathbf{y}_{\rm t}$ and the inter-cell interferers' point process $\Psi= \Psi_{\rm m}\cup \Psi_{\rm t}$) 
for the mobile user and the IoT device with ${\rm SIR}$ thresholds $\beta_{\rm m}$ and $\beta_{\rm t}$ are
\begin{align}
{\rm P_{\rm m}(\beta_{\rm m};\mathbf{y},{\rm \Psi})}&=\mathbb{P}({\rm SIR_{\rm m}}>\beta_{\rm m}|\mathbf{y},{\rm \Psi}),~\text{and}\label{CSPm}\\
{\rm P_{\rm t}(\beta_{\rm m},\beta_{\rm t};\mathbf{y},{\rm \Psi}})&=\mathbb{P}({\rm SIR_{\rm m}}>\beta_{\rm m}, {\rm SIR_{\rm t}}>\beta_{\rm t}|{\mathbf{y},{\rm \Psi}})\label{CSPt1},
\end{align}   
where $\mathbf{y} = \mathbf{y}_{\rm m}\cup\mathbf{y}_{\rm t}$. The success probability of the IoT device depends on the joint decoding of messages of both the devices. However, because of the assumption of the adaptive transmission, the mobile user's signal is always decodable at the BS with arbitrarily small error probability. Hence, its intra-cell interference to the IoT devices can be eliminated using SIC because of which \eqref{CSPt1} reduces to
\begin{equation}
{\rm P_{\rm t}(\beta_{\rm t};\mathbf{y},{\rm \Psi})}=\mathbb{P}( {\rm SIR_{\rm t}}>\beta_{\rm t}|{\mathbf{y},{\rm \Psi})}.\label{CSPt}
\end{equation}  
The distribution of conditional success probability, termed  {\em meta distribution} \cite{haenggi2015meta}, is useful in studying the network performance in terms of the percentage of devices experiencing success probability above some pre-defined threshold. Hence, we aim to derive the meta distributions for both the mobile user and IoT device under the aforementioned NOMA strategy. 

Under the adaptive transmission strategy, the {\em ergodic rate} of the typical mobile user is
\begin{align}
\mathcal{R}_{\rm m}&=\mathbb{E}[\log_2(1+{\rm SIR_m})].
\end{align}
As the IoT devices are deployed to transmit their observations in a timely manner, it is meaningful to characterize their performance using the {\em mean local  delay}. The mean local  delay is defined in \cite{haenggi2015meta} as the mean number of transmissions needed for the successful delivery of a packet.\vspace{-2mm}
\section{Analysis of Adaptive Rate NOMA}
 The link distance distribution and the point processes of the inter-cell interfering devices are crucial for the meta distribution analysis, which we will discuss next. Recall, we assume that the paired mobile  user and IoT device are located uniformly at random within $\mathcal{V}_o$ and  $V_o$, respectively. 
 The probability density function (${\rm pdf}$)  of the  link distance $R_{\rm t}$ of  IoT device can be approximated as
\begin{equation}
      f_{R_{\rm t}}(r) =  2\pi\rho\lambda_{\rm b}r \exp(-\pi\rho\lambda_{\rm b} r^2),\hspace{0.3cm} r\geq0\hspace{0.1cm}, \label{fRt}
\end{equation}
where $\rho = 9/7$ \cite{mankar2020distance}. The serving link distance $R_{\rm m}$ of the  mobile user is bounded by $L$ as it is selected from $\mathcal{V}_o$. Hence, its ${\rm pdf}$ can be obtained by truncating \eqref{fRt} as
\begin{equation}
    f_{R_{\rm m}}(r) = \frac{2\pi\rho\lambda_{\rm b}r \exp\left(-\pi\rho\lambda_{\rm b}r^2\right)}{1-\exp\left(-\pi\rho\lambda_{\rm b}L^2\right)}, ~~~0\leq r \leq L.\label{fRm}
\end{equation} 
Now, we characterize the inter-cell interferers' point processes $\Psi_{\rm m}$ and $\Psi_{\rm t}$ in the following. Both these processes are non-stationary since the inter-cell interfering devices lie outside $V_o$. It is well-known that the exact characterization of uplink interferers' point process is difficult. However, an accurate approximation of the pair correlation function (${\rm pcf}$) of $\Psi_{\rm m}$ as seen from the typical BS is derived in \cite{parida2019stochastic} as $g_{\rm m}(r)=1-\exp(-2\pi\bar{\mathcal{V}}_o^{-1}r^2)$,
where $\bar{\mathcal{V}}_o^{-1}=\mathbb{E}[|\mathcal{V}_o|^{-1}]$ and $|A|$ denotes the area of set $A$.  Using this ${\rm pcf}$ and the fact that there is a single interfering user from each cell, we can approximate $\Psi_{\rm m}$ using a non-homogeneous PPP with density
\begin{equation}
\tilde{\lambda}_{\rm m}(r)=\lambda_{\rm b}g_{\rm m}(r).\label{pcf_Mobile}
\end{equation}
The ${\rm pcf}$ of $\Psi_{\rm t}$ can be obtained simply by replacing $\bar{\mathcal{V}}_o^{-1}$  with $\mathbb{E}[|V_o|^{-1}]\approx\frac{7}{5}\lambda_{\rm b}$ (which corresponds to the case $L\to\infty$) as $g_{\rm t}(r)=1- \exp(-\frac{14}{5}\pi\lambda_{\rm b}{r}^2)$,
 which exactly matches with the ${\rm pcf}$ derived in \cite{haenggi2017user}.
Thus, similar to  $\Psi_{\rm m}$, we can also approximate  $\Psi_{\rm t}$ using a non-homogeneous PPP with density 
\begin{equation}
    \tilde{\lambda}_{\rm t}(r)=\lambda_{\rm b}g_{\rm t}(r).\label{pcf_Iot}
\end{equation}
  Now, in the following, we analyze the meta distributions  of ${\rm SIR_m}$ and ${\rm SIR_t}$. It is well-known that the exact expression for meta distribution is difficult to derive. Hence, similar to \cite{haenggi2015meta}, we focus on deriving the  moments of these meta distributions.\vspace{-1mm}
\begin{theorem}
\label{thm_Mb_m}
The b-th moment of meta-distribution of the typical mobile user under the adaptive rate NOMA  is 
\begin{equation}
     M^{\rm m}_{\rm b} = \mathbb{E}_{R_{\rm m}}\hspace{-1mm}\left[\mathcal{I}_1(s_{\rm m})\mathcal{I}_2(s_{\rm m})\mathcal{M}(s_{\rm m})\right], \label{Mbm}
\end{equation}
where{\small $s_{\rm m}=\frac{\beta_{\rm m}}{\rho_{\rm m}}R_{\rm m}^{\alpha (1-\epsilon_{\rm m})}$,
   \begin{align*}
  \mathcal{I}_1(s_{\rm m}) &= \mathbb{E}_{R_{\rm t}}\hspace{-1mm}\left[\left(1+s_{\rm m}\rho_{\rm t}R_{\rm t}^{\alpha (\epsilon_{\rm t} -1) }\right)^{-b}\right],\\
  \mathcal{I}_2(s_{\rm m}) &=  \exp\left(-2\pi\int_0^\infty\tilde{\lambda}_{\rm t}(u)\left(1-\int_0^{u} (1+\right.\right.\\
  &\qquad \left.\vphantom{\int_0^\infty}\left.\vphantom{\int_0^u} \hspace{5mm}{s_{\rm m}\rho_{\rm t}r^{\alpha \epsilon_{\rm t} }}u^{-\alpha})^{-b}
f_{R_{\mathbf x_{\rm t}}}(r\vert u){\rm d}r\right)u{\rm d}u\right),\\
  \mathcal{M}(s_{\rm m}) &=  \exp\left( -2\pi\int_0^\infty\hspace{-1mm}\tilde{\lambda}_{\rm m}(u)\hspace{2mm}\left(1-\int_0^{{\rm min}(u,L)} \hspace{-1cm}(1+\right.\right.\\
  &\qquad \left.\vphantom{\int_0^\infty}\left.\vphantom{\int_0^\infty}\hspace{5mm}{s_{\rm m}\rho_{\rm m}r^{\alpha \epsilon_{\rm m} }}u^{-\alpha})^{-b} f_{R_{\mathbf x_{\rm m}}}(r\vert u){\rm d}r\right)u{\rm d}u\right),
  \end{align*}}
  \hspace{-2mm}and  ${\rm pdf}$s of $R_{\rm t}$, $R_{\rm m}$, $R_{\mathbf{x}_{\rm t}}$ and $R_{\mathbf{x}_{\rm m}}$ are given in \eqref{fRt}, \eqref{fRm}, \eqref{pdf_Rxt_Dxt} and \eqref{pdf_Rxm_Dxm}.
\end{theorem}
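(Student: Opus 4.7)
\textit{Proof proposal.} The plan is to start from the definition $M^{\rm m}_{\rm b}=\mathbb{E}[{\rm P}_{\rm m}(\beta_{\rm m};\mathbf{y},\Psi)^b]$ and explicitly reduce the expectation over the exponential fading of the desired mobile user's channel. Since $h_{\rm m}\sim\exp(1)$ and $h_{\rm m}$ is independent of everything else, conditioning on $R_{\rm m}, R_{\rm t}$ and on the interferers $\Psi$ (with their link distances and fading gains) gives
\begin{equation*}
{\rm P}_{\rm m}(\beta_{\rm m};\mathbf{y},\Psi)=\mathbb{E}_{h_{\rm m}}\!\left[\mathbbm{1}\{h_{\rm m}>s_{\rm m}(\rho_{\rm t}h_{\rm t}R_{\rm t}^{\alpha(\epsilon_{\rm t}-1)}+I_{\rm m}+I_{\rm t})\}\right]=\exp\!\left(-s_{\rm m}(\rho_{\rm t}h_{\rm t}R_{\rm t}^{\alpha(\epsilon_{\rm t}-1)}+I_{\rm m}+I_{\rm t})\right),
\end{equation*}
with $s_{\rm m}=\beta_{\rm m}\rho_{\rm m}^{-1}R_{\rm m}^{\alpha(1-\epsilon_{\rm m})}$ as defined in the theorem. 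Raising this to the $b$-th power and then averaging over the remaining fading variables $h_{\rm t}$ and $\{h_{\mathbf{x}_i}\}$ (all independent $\exp(1)$) yields the product form
\begin{equation*}
\mathbb{E}\!\left[\tfrac{1}{(1+s_{\rm m}\rho_{\rm t}R_{\rm t}^{\alpha(\epsilon_{\rm t}-1)})^b}\!\!\prod_{\mathbf{x}\in\Psi_{\rm t}}\!\tfrac{1}{(1+s_{\rm m}\rho_{\rm t}R_{\mathbf{x}_{\rm t}}^{\alpha\epsilon_{\rm t}}D_{\mathbf{x}_{\rm t}}^{-\alpha})^b}\prod_{\mathbf{x}\in\Psi_{\rm m}}\!\tfrac{1}{(1+s_{\rm m}\rho_{\rm m}R_{\mathbf{x}_{\rm m}}^{\alpha\epsilon_{\rm m}}D_{\mathbf{x}_{\rm m}}^{-\alpha})^b}\right].
\end{equation*}

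Next, the plan is to invoke the independence approximation that decouples the three contributions above: the intra-cell IoT interferer (giving $\mathcal{I}_1$), the inter-cell IoT interferers (giving $\mathcal{I}_2$), and the inter-cell mobile interferers (giving $\mathcal{M}$). The first term is handled directly by integrating out $R_{\rm t}$ against $f_{R_{\rm t}}$ in \eqref{fRt}, producing $\mathcal{I}_1(s_{\rm m})$. For the two interferer processes, I would replace $\Psi_{\rm t}$ and $\Psi_{\rm m}$ by the non-homogeneous PPP approximations of intensities $\tilde{\lambda}_{\rm t}(\cdot)$ and $\tilde{\lambda}_{\rm m}(\cdot)$ in \eqref{pcf_Iot} and \eqref{pcf_Mobile}, and then apply the PPP probability generating functional to each. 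The integrand at a point $u=D_{\mathbf{x}_i}$ from the typical BS requires averaging $(1+s_{\rm m}\rho_i r^{\alpha\epsilon_i}u^{-\alpha})^{-b}$ over the conditional pdf $f_{R_{\mathbf{x}_i}}(r\mid u)$ of the interferer's link distance. Crucially, for $\Psi_{\rm m}$ the interferer's link distance is bounded by $L$ (since interfering mobile users also lie in JM cells), so the inner integral runs over $[0,\min(u,L)]$, whereas for $\Psi_{\rm t}$ it runs over $[0,u]$ (interferer closer to its own BS than to the origin). After simplification this gives exactly $\mathcal{I}_2(s_{\rm m})$ and $\mathcal{M}(s_{\rm m})$.

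Finally, I would de-condition on $R_{\rm m}$ using the truncated density \eqref{fRm}, which is the outer expectation $\mathbb{E}_{R_{\rm m}}[\,\cdot\,]$ in \eqref{Mbm}. The main obstacle, and the step where care is needed, is the interferer characterization: $\Psi_{\rm t}$ and $\Psi_{\rm m}$ are neither stationary nor strictly Poisson, so the factorization and PGFL step rely on the non-homogeneous PPP approximation driven by the pair correlation functions $g_{\rm t}$ and $g_{\rm m}$, together with the (approximate) independence between each interferer's link distance to its own BS and its distance to the typical BS. Justifying this requires quoting the constructions of $f_{R_{\mathbf{x}_{\rm t}}}(r\mid u)$ and $f_{R_{\mathbf{x}_{\rm m}}}(r\mid u)$ (referenced via \eqref{pdf_Rxt_Dxt} and \eqref{pdf_Rxm_Dxm}) and observing that the truncation at $L$ correctly propagates into the inter-cell mobile interferer pdf; the remaining calculations are standard Laplace-transform manipulations.
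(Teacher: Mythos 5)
Your overall roadmap --- the Rayleigh-fading reduction to a product of per-interferer factors, the decoupling into $\mathcal{I}_1$, $\mathcal{I}_2$ and $\mathcal{M}$, the non-homogeneous PPP approximation of $\Psi_{\rm m}$ and $\Psi_{\rm t}$ with the pcf-based intensities followed by the PGFL, the averaging over the conditional link-distance pdfs with the $[0,u]$ and $[0,\min(u,L)]$ limits, and the final expectation over $R_{\rm m}$ --- is exactly the paper's proof. However, there is a genuine error in your first step. You define ${\rm P}_{\rm m}$ by averaging only over $h_{\rm m}$ while conditioning on $h_{\rm t}$ and the interferers' fading gains, obtaining $\exp\left(-s_{\rm m}(\rho_{\rm t}h_{\rm t}R_{\rm t}^{\alpha(\epsilon_{\rm t}-1)}+I_{\rm m}+I_{\rm t})\right)$, and you then claim that raising this to the power $b$ and averaging over the remaining fading yields factors of the form $(1+s_{\rm m}\rho_{\rm t}R_{\rm t}^{\alpha(\epsilon_{\rm t}-1)})^{-b}$. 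It does not: for $h\sim\exp(1)$ one has $\mathbb{E}_h[e^{-bsh}]=(1+bs)^{-1}$, whereas $\left(\mathbb{E}_h[e^{-sh}]\right)^{b}=(1+s)^{-b}$; these coincide only at $b=1$. The conditional success probability in \eqref{CSPm} is conditioned on the \emph{locations} $\mathbf{y},\Psi$ only, so all fading gains ($h_{\rm m}$, $h_{\rm t}$, $h_{\mathbf{x}_{\rm m}}$, $h_{\mathbf{x}_{\rm t}}$) must be averaged \emph{before} raising to the power $b$. Doing so gives ${\rm P}_{\rm m}(\beta_{\rm m};\mathbf{y},\Psi)=\left(1+s_{\rm m}\rho_{\rm t}R_{\rm t}^{\alpha(\epsilon_{\rm t}-1)}\right)^{-1}\prod_{\mathbf{x}\in\Psi_{\rm m}}\left(1+s_{\rm m}\rho_{\rm m}R_{\mathbf{x}_{\rm m}}^{\alpha\epsilon_{\rm m}}D_{\mathbf{x}_{\rm m}}^{-\alpha}\right)^{-1}\prod_{\mathbf{x}\in\Psi_{\rm t}}\left(1+s_{\rm m}\rho_{\rm t}R_{\mathbf{x}_{\rm t}}^{\alpha\epsilon_{\rm t}}D_{\mathbf{x}_{\rm t}}^{-\alpha}\right)^{-1}$, and only then does the $b$-th power produce the $(\cdot)^{-b}$ factors in your displayed product. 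The error is not cosmetic: the theorem is stated for arbitrary $b$ (the meta distribution is recovered from imaginary moments, and the companion result for the IoT device is used at $b=-1$ for the mean local delay), and for $b\neq 1$ your stated order of operations yields a different quantity than the one claimed.

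Once that step is corrected, the remainder of your argument coincides with the paper's: the paper likewise pulls out the intra-cell IoT factor and averages it over $f_{R_{\rm t}}$ to get $\mathcal{I}_1$, averages each per-interferer factor over $f_{R_{\mathbf{x}_{\rm t}}}(r\vert u)$ on $[0,u]$ and $f_{R_{\mathbf{x}_{\rm m}}}(r\vert u)$ on $[0,\min(u,L)]$, applies the PGFL of the approximating PPPs with intensities $\tilde{\lambda}_{\rm t}$ and $\tilde{\lambda}_{\rm m}$, and finally de-conditions on $R_{\rm m}$ using \eqref{fRm}. Your explicit flagging of the independence/PPP approximations as the delicate point is accurate and is something the paper leaves largely implicit.
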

\begin{proof}
 Please refer to the Appendix for the proof.
\end{proof}\vspace{-2mm}
\begin{corollary}
 The \textit{b}-th moment of meta-distribution of the typical mobile user under OMA is 
\begin{equation}
        \tilde{M}^{\rm m}_{\rm b} = \mathbb{E}_{R_{\rm m}}\left[\mathcal{M}(s_{\rm m})\right],
\end{equation}
where $s_{\rm m}$ and $\mathcal{M}(s_{\rm m})$ are given in Theorem 1.
\end{corollary}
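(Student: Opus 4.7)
The plan is to mirror the derivation of Theorem \ref{thm_Mb_m} and simply drop every term that arises from the IoT transmissions, since under OMA the IoT devices occupy a disjoint spectral band and therefore contribute neither intra-cell nor inter-cell interference to the mobile user's link.

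First, I would rewrite the mobile user's ${\rm SIR}$ under OMA by removing the intra-cell term $\rho_{\rm t}h_{\rm t}R_{\rm t}^{\alpha(\epsilon_{\rm t}-1)}$ and the inter-cell IoT interference $I_{\rm t}$ from the denominator of \eqref{SIRm}, so that
\begin{equation*}
{\rm SIR_{\rm m}^{OMA}} = \frac{\rho_{\rm m} h_{\rm m} R_{\rm m}^{\alpha(\epsilon_{\rm m}-1)}}{I_{\rm m}}.
\end{equation*}
Conditioning on $R_{\rm m}$ and on the inter-cell mobile-user interferer process $\Psi_{\rm m}$, the Rayleigh-fading assumption on $h_{\rm m}$ gives the usual exponential form for the conditional success probability, namely ${\rm P}_{\rm m}^{\rm OMA}(\beta_{\rm m};R_{\rm m},\Psi_{\rm m}) = \exp(-s_{\rm m} I_{\rm m})$ with $s_{\rm m} = \beta_{\rm m}\rho_{\rm m}^{-1} R_{\rm m}^{\alpha(1-\epsilon_{\rm m})}$, exactly the same $s_{\rm m}$ as in Theorem \ref{thm_Mb_m}.

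Next, I would raise this conditional success probability to the $b$-th power and take the expectation over the small-scale fading of interferers and over $\Psi_{\rm m}$. Because $\Psi_{\rm m}$ is approximated by an inhomogeneous PPP with density $\tilde{\lambda}_{\rm m}(r)$ as in \eqref{pcf_Mobile}, and because each interfering mobile user's serving-link distance has the conditional density $f_{R_{\mathbf{x}_{\rm m}}}(r\vert u)$, applying the probability generating functional of a PPP together with the Rayleigh expectation on each $h_{\mathbf{x}_{\rm m}}$ produces precisely the factor $\mathcal{M}(s_{\rm m})$ defined in Theorem \ref{thm_Mb_m}. No $\mathcal{I}_1$ or $\mathcal{I}_2$ factor appears because the IoT signal and the process $\Psi_{\rm t}$ are absent from the denominator under OMA.

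Finally, I would deconditon on the serving link distance using the density $f_{R_{\rm m}}$ in \eqref{fRm}, which yields
\begin{equation*}
\tilde{M}_{\rm b}^{\rm m} = \mathbb{E}_{R_{\rm m}}[\mathcal{M}(s_{\rm m})],
\end{equation*}
as claimed. There is no genuine obstacle here: every step is already carried out inside the proof of Theorem \ref{thm_Mb_m}, and the main bookkeeping task is simply to verify that omitting the IoT contributions turns the $\mathcal{I}_1$ and $\mathcal{I}_2$ factors into unity while leaving $s_{\rm m}$ and $\mathcal{M}(s_{\rm m})$ structurally unchanged.
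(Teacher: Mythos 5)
Your overall route is the intended one: the paper states this corollary without a separate proof, treating it as the immediate specialization of Theorem~\ref{thm_Mb_m} in which OMA removes the paired IoT device's intra-cell term (so $\mathcal{I}_1\to 1$) and the inter-cell IoT process $\Psi_{\rm t}$ (so $\mathcal{I}_2\to 1$), leaving only the mobile-interference factor $\mathcal{M}(s_{\rm m})$ to be averaged over $R_{\rm m}$.

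There is, however, one step that would fail as written. You take the conditional success probability to be $\exp(-s_{\rm m}I_{\rm m})$, i.e., you average only over $h_{\rm m}$ and leave the interferers' fading gains $h_{\mathbf{x}_{\rm m}}$ inside $I_{\rm m}$, and you then propose to raise this quantity to the $b$-th power \emph{before} averaging over that fading. That computes $\mathbb{E}\left[e^{-b s_{\rm m} I_{\rm m}}\right]$, whose per-interferer kernel is $\left(1+b\,s_{\rm m}\rho_{\rm m}r^{\alpha\epsilon_{\rm m}}u^{-\alpha}\right)^{-1}$, not the $\left(1+s_{\rm m}\rho_{\rm m}r^{\alpha\epsilon_{\rm m}}u^{-\alpha}\right)^{-b}$ appearing in $\mathcal{M}(s_{\rm m})$; the two coincide only for $b=1$. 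In the meta-distribution framework the conditional success probability is conditioned on the locations only, so the interferers' fading must be averaged out first, giving $\prod_{\mathbf{x}\in\Psi_{\rm m}}\left(1+s_{\rm m}\rho_{\rm m}R_{\mathbf{x}_{\rm m}}^{\alpha\epsilon_{\rm m}}D_{\mathbf{x}_{\rm m}}^{-\alpha}\right)^{-1}$, exactly as in step (a) of the paper's proof of Theorem~\ref{thm_Mb_m}; only then is this product raised to the $b$-th power, the interferers' link distances averaged with $f_{R_{\mathbf{x}_{\rm m}}}(r\vert u)$, and the probability generating functional of the inhomogeneous PPP with density $\tilde{\lambda}_{\rm m}$ applied. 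With that ordering corrected, the rest of your argument goes through and yields $\tilde{M}^{\rm m}_{\rm b}=\mathbb{E}_{R_{\rm m}}\left[\mathcal{M}(s_{\rm m})\right]$ as claimed.
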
\vspace{-1mm}
Now, we present moments of meta distributions for the IoT device under the adaptive rate NOMA  and OMA strategies.\vspace{-1mm}
\begin{theorem}
\label{thm_IoT}
The b-th moment of meta-distribution of the typical IoT device under the adaptive rate NOMA  is
\begin{equation}
     M^{\rm t}_{\rm b} = \mathbb{E}_{R_{\rm t}}\hspace{-1mm}\left[\mathcal{I}_2(s_{\rm t})\mathcal{M}(s_{\rm t}) \right],\label{Mbt}
\end{equation}
where $s_{\rm t} = \frac{\beta_{\rm t}}{\rho_{\rm t}}R_{\rm t}^{\alpha (1-\epsilon_{\rm t})}$, $\mathcal{I}_2(s_{\rm t})$ and $\mathcal{M}(s_{\rm t})$ are given in \eqref{Mbm}.
 \end{theorem}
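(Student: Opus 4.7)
The plan is to mirror the derivation of Theorem \ref{thm_Mb_m}, but exploit the simplification afforded by SIC. Because the mobile user's codeword is (by assumption) always decodable with vanishing error, its contribution to the received signal is subtracted before the IoT packet is decoded; consequently the intra-cell interference term that produced the factor $\mathcal{I}_1(s_{\rm m})$ in Theorem \ref{thm_Mb_m} is absent in $\mathrm{SIR}_{\rm t}$ as defined in \eqref{SIRt}. Only the inter-cell terms $I_{\rm m}$ and $I_{\rm t}$ remain, and they are statistically identical to those appearing in Theorem \ref{thm_Mb_m}.

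Starting from \eqref{CSPt}, I would condition on the serving distance $R_{\rm t}$, the interferer point processes $\Psi_{\rm m},\Psi_{\rm t}$, and the associated interferer-to-own-BS distances $R_{\mathbf x_{\rm m}},R_{\mathbf x_{\rm t}}$. Since $h_{\rm t}\sim\exp(1)$, the threshold event $\{\mathrm{SIR}_{\rm t}>\beta_{\rm t}\}$ gives
\begin{equation*}
P_{\rm t}(\beta_{\rm t};\mathbf y,\Psi)=\mathbb{E}_{\{h_{\mathbf x_i}\}}\!\left[\exp\!\left(-s_{\rm t}(I_{\rm m}+I_{\rm t})\right)\right],
\end{equation*}
with $s_{\rm t}=\frac{\beta_{\rm t}}{\rho_{\rm t}}R_{\rm t}^{\alpha(1-\epsilon_{\rm t})}$. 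Averaging the independent Rayleigh fades $h_{\mathbf x_i}\sim\exp(1)$ of the interferers turns each term in the two interference sums into a $(1+s_{\rm t}\rho_{i}R_{\mathbf x_i}^{\alpha\epsilon_i}D_{\mathbf x_i}^{-\alpha})^{-1}$ factor, so the conditional success probability factorises over $\Psi_{\rm m}$ and $\Psi_{\rm t}$.

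Raising to the $b$-th power and invoking the PGFL of the two independent inhomogeneous PPP approximations with intensities $\tilde\lambda_{\rm m}(u)$ and $\tilde\lambda_{\rm t}(u)$ given in \eqref{pcf_Mobile}--\eqref{pcf_Iot}, I would write
\begin{equation*}
\mathbb{E}\!\left[P_{\rm t}^b\,\big|\,R_{\rm t}\right]=\exp\!\Big(\!-\!2\pi\!\int_0^\infty\!\!\tilde\lambda_{\rm t}(u)(1-\mathcal{L}_{\rm t,b}(u))u\,{\rm d}u\Big)\exp\!\Big(\!-\!2\pi\!\int_0^\infty\!\!\tilde\lambda_{\rm m}(u)(1-\mathcal{L}_{\rm m,b}(u))u\,{\rm d}u\Big),
\end{equation*}
where $\mathcal{L}_{\rm t,b}(u)$ and $\mathcal{L}_{\rm m,b}(u)$ are the expectations of $(1+s_{\rm t}\rho_i r^{\alpha\epsilon_i}u^{-\alpha})^{-b}$ under the conditional serving-distance pdfs $f_{R_{\mathbf x_{\rm t}}}(r\vert u)$ and $f_{R_{\mathbf x_{\rm m}}}(r\vert u)$ (integrated over $[0,u]$ and $[0,\min(u,L)]$ respectively, reflecting that a mobile-user interferer must lie inside its own JM cell of radius at most $L$). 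These are precisely the integrals that define $\mathcal{I}_2(\cdot)$ and $\mathcal{M}(\cdot)$ in Theorem \ref{thm_Mb_m}, now evaluated at $s_{\rm t}$. A final de-conditioning of $R_{\rm t}$ against the pdf \eqref{fRt} yields \eqref{Mbt}.

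The only delicate point, and the main obstacle to making this fully rigorous, is the same one underlying Theorem \ref{thm_Mb_m}: the inhomogeneous-PPP and independent-serving-distance approximation of the true uplink interferer processes $\Psi_{\rm m},\Psi_{\rm t}$. Once that approximation is granted, along with the conditional densities of $R_{\mathbf x_{\rm m}},R_{\mathbf x_{\rm t}}$ already used in Theorem \ref{thm_Mb_m}, the derivation for the IoT device is simply the mobile-user derivation with the intra-cell factor $\mathcal{I}_1$ removed and $s_{\rm m}\mapsto s_{\rm t}$, so I would invoke the appendix of Theorem \ref{thm_Mb_m} and omit the repeated PGFL computation.
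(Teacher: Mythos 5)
Your proposal is correct and follows essentially the same route as the paper: average the Rayleigh fade $h_{\rm t}$ and the interferers' fades to factorize the conditional success probability over $\Psi_{\rm m}$ and $\Psi_{\rm t}$, raise to the $b$-th power, and reuse the PGFL computation from Theorem~\ref{thm_Mb_m} with $s_{\rm m}\mapsto s_{\rm t}$ and the intra-cell factor $\mathcal{I}_1$ dropped. The paper's proof is exactly this, stated more tersely.
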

\begin{proof}
\newcommand\myeqa{\stackrel{\mathclap{\normalfont\mbox{(a)}}}{=}}
  From \eqref{CSPt}, the conditional coverage probability of the typical IoT device located at $\mathbf{y}_{\rm t}$ is
  \begin{align*}
    {\rm P_t}(\beta_{\rm t}&;\mathbf{y},{\rm \Psi})= \mathbb{P}\left(h_{\rm t} > I_{\rm m}s_{\rm t} + I_{\rm t}s_{\rm t}|\mathbf{y},\Psi\right),\\
      &\stackrel{(a)}{=}\prod\limits_{\mathbf{x}\in{\rm \Psi_m}}\frac{1}{1+s_{\rm t}\rho_{\rm m}R_{\mathbf x_{\rm m}}^{\alpha\epsilon_{\rm m}}D_{\mathbf x_{\rm m}}^{-\alpha}}\prod\limits_{\mathbf{x}\in{\rm \Psi_t}}\frac{1}{1+s_{\rm t}\rho_{\rm t}R_{\mathbf x_{\rm t}}^{\alpha\epsilon_{\rm t}}D_{\mathbf x_{\rm t}}^{-\alpha}},
  \end{align*}
  where (a) follows from the assumption that $h_{\rm t},h_{\mathbf{x}_{\rm m}}$ and $h_{\mathbf{x}_{\rm t}} \sim \exp(1)$ and since $\Psi_{\rm m}$ and $\Psi_{\rm t}$ are independent.
  
  Now, \textit{b}-th moment of meta distribution can be obtained as
$M^{\rm t}_{\rm b} = \mathbb{E}[{\rm P^b_{\rm t}(\beta_{\rm t};\mathbf{y},{\rm \Psi})}]$.
  Further, following the similar steps given in the proof of Theorem \ref{thm_Mb_m}, we obtain \eqref{Mbt}.
\end{proof}
\begin{corollary}
 The \textit{b}-th moment of meta-distribution of the typical IoT device under OMA is given by
\begin{equation}
 \tilde{M}^{\rm t}_{\rm b} = \mathbb{E}_{R_{\rm t}}\left[\mathcal{I}_2(s_{\rm t})\right],
\end{equation}
where $s_{\rm t}$ and $\mathcal{I}_2(s_{\rm t})$ are given in Theorem \ref{thm_IoT}.
\end{corollary}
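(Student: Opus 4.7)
The plan is to mirror the proof of Theorem \ref{thm_IoT} verbatim, with the single simplification that under OMA the IoT device is served on a dedicated spectral resource so that the paired mobile user and, more importantly, the entire interfering process $\Psi_{\rm m}$ of mobile users disappear from the expression. Thus I first write down the OMA SIR for the typical IoT device as $\widetilde{\rm SIR}_{\rm t}=\rho_{\rm t}h_{\rm t}R_{\rm t}^{\alpha(\epsilon_{\rm t}-1)}/I_{\rm t}$, which is obtained from \eqref{SIRt} by dropping the $I_{\rm m}$ term. Conditioning on $\mathbf{y}_{\rm t}$ and $\Psi_{\rm t}$ and using $h_{\rm t}\sim\exp(1)$ gives
\begin{equation*}
\widetilde{\rm P}_{\rm t}(\beta_{\rm t};\mathbf{y}_{\rm t},\Psi_{\rm t}) = \prod_{\mathbf{x}\in\Psi_{\rm t}}\frac{1}{1+s_{\rm t}\rho_{\rm t}R_{\mathbf{x}_{\rm t}}^{\alpha\epsilon_{\rm t}}D_{\mathbf{x}_{\rm t}}^{-\alpha}},
\end{equation*}
with $s_{\rm t}$ as in Theorem \ref{thm_IoT}.

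Next I would raise this conditional success probability to the $b$-th power and deconditioning over $\Psi_{\rm t}$ using the PGFL of the (approximating) non-homogeneous PPP with intensity $\tilde{\lambda}_{\rm t}(u)$ from \eqref{pcf_Iot}. The inner integration over $r$ up to $u$ uses the conditional distance law $f_{R_{\mathbf{x}_{\rm t}}}(r\mid u)$, exactly as in the derivation of $\mathcal{I}_2(s_{\rm m})$ in Theorem \ref{thm_Mb_m}. The resulting Laplace-like functional is precisely $\mathcal{I}_2(s_{\rm t})$ evaluated at the OMA threshold. Finally I would average over the serving link distance $R_{\rm t}$ whose pdf is given in \eqref{fRt}, yielding $\widetilde{M}_{\rm b}^{\rm t}=\mathbb{E}_{R_{\rm t}}[\mathcal{I}_2(s_{\rm t})]$.

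There is essentially no analytical obstacle here, since every step reuses machinery already assembled in Theorems \ref{thm_Mb_m} and \ref{thm_IoT}; the only thing to argue carefully is that the \emph{modelling} of interferers is unchanged, i.e.\ that in the OMA resource block the active IoT interferers from the neighbouring cells are still well approximated by the non-homogeneous PPP with intensity $\tilde{\lambda}_{\rm t}(r)$. This is justified because OMA still schedules at most one IoT device per cell (as in the NOMA setup), so the pair correlation function $g_{\rm t}(r)$ and hence \eqref{pcf_Iot} continue to hold. Once this modelling point is accepted, the corollary follows by taking $\mathcal{M}(s_{\rm t})\equiv 1$ in \eqref{Mbt}.
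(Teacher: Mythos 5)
Your proposal is correct and matches what the paper intends: the corollary is stated without an explicit proof precisely because it follows from Theorem~\ref{thm_IoT} by dropping the mobile-user interference term $I_{\rm m}$ (equivalently, setting $\mathcal{M}(s_{\rm t})\equiv 1$), which is exactly your argument. Your additional remark that the non-homogeneous PPP approximation of $\Psi_{\rm t}$ with intensity $\tilde{\lambda}_{\rm t}(r)$ remains valid under OMA (one IoT interferer per cell) is a worthwhile clarification that the paper leaves implicit.
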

 The first moment of the conditional success probability is  the spatially averaged distribution of  ${\rm SIR}$. Thus, the complementary ${\rm CDF}$s of ${\rm SIR_m}$ under NOMA and OMA becomes
 \begin{equation}
     \bar{F}_{\rm m}(\beta_{\rm m})=M_{1}^{\rm m} \text{~~and~~} \tilde{F}_{\rm m}(\beta_{\rm m})=\tilde{M}_{1}^{\rm m},\label{CCDF_SIR}
 \end{equation}
 respectively. In OMA, each BS is considered to schedule its associated mobile users  and IoT devices for $\eta$ and $1-\eta$ fractions of time.
 Using \eqref{CCDF_SIR}, we now present the ergodic rate of the typical mobile user in the following theorem.
 \begin{corollary}
 Ergodic rates of the typical mobile user under NOMA and OMA, respectively, are  
\begin{align}
\mathcal{R}_{\rm m}&=\frac{1}{\ln(2)}\int_0^\infty\frac{1}{1+\gamma}\bar{F}_{\rm m}(\gamma){\rm d}\gamma,\label{ErgRateNOMA}\\
\text{and~}\tilde{\mathcal{R}}_{\rm m}&=\frac{\eta}{\ln(2)}\int_0^\infty\frac{1}{1+\gamma}\tilde{F}_{\rm m}(\gamma){\rm d}\gamma.\label{ErgRateOMA}
\end{align}
 \end{corollary}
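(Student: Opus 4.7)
The plan is to derive both expressions from the standard tail-integral representation of the expectation of a non-negative random variable, applied to $Y = \log_2(1+{\rm SIR_m})$. Since $Y \ge 0$ almost surely, we have $\mathbb{E}[Y] = \int_0^\infty \mathbb{P}(Y > t)\,{\rm d}t$, and because $\log_2(1+x) > t \iff x > 2^t - 1$, the integrand equals $\bar{F}_{\rm m}(2^t-1)$ in the NOMA case (and $\tilde{F}_{\rm m}(2^t-1)$ in the OMA case). I would then perform the change of variables $\gamma = 2^t - 1$, so that $t = \log_2(1+\gamma)$ and ${\rm d}t = \frac{1}{\ln 2}\cdot\frac{{\rm d}\gamma}{1+\gamma}$, with the limits remaining $[0,\infty)$. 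Substituting produces exactly \eqref{ErgRateNOMA}.

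The identification $\bar{F}_{\rm m}(\gamma) = M_1^{\rm m}$ used in the previous step is by \eqref{CCDF_SIR}, which itself is a tower-property argument: averaging the conditional success probability ${\rm P_m}(\gamma;\mathbf{y},\Psi)$ over $\mathbf{y}$ and $\Psi$ yields the unconditional $\mathbb{P}({\rm SIR_m}>\gamma)$, which is precisely the $b=1$ moment provided by Theorem~\ref{thm_Mb_m}. For the OMA expression \eqref{ErgRateOMA}, the only additional ingredient is the time-sharing factor: since each BS schedules its associated mobile user and IoT device on disjoint time slots of fractions $\eta$ and $1-\eta$ respectively, the long-term average mobile-user rate equals $\eta$ times the rate achieved conditional on being scheduled. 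The within-slot calculation is identical to the NOMA derivation, except that the intra-cell interference contribution from the paired IoT device is absent, which is exactly the distinction captured by $\tilde{F}_{\rm m}$ versus $\bar{F}_{\rm m}$ (equivalently, by the absence of the $\mathcal{I}_1$ factor in Corollary~1).

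The only subtlety, which is mild, is justifying the interchange of integration and expectation implicit in the tail formula. This is immediate from Tonelli's theorem applied to the jointly measurable, nonnegative integrand $\mathbb{P}(Y>t)$. Finiteness of $\mathcal{R}_{\rm m}$ is then inherited from the tail behavior of $\bar{F}_{\rm m}(\gamma)$ as $\gamma\to\infty$: the factor $\mathcal{I}_1(s_{\rm m})$ in Theorem~\ref{thm_Mb_m} decays polynomially in $\gamma$ while $\mathcal{M}(s_{\rm m})$ and $\mathcal{I}_2(s_{\rm m})$ remain bounded above by $1$, which renders the integrand in \eqref{ErgRateNOMA} integrable near infinity, while the $(1+\gamma)^{-1}$ factor ensures integrability near zero. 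No other obstacles arise, and the same argument applied with $\bar{F}_{\rm m}$ replaced by $\tilde{F}_{\rm m}$ (and scaled by $\eta$) yields \eqref{ErgRateOMA}.
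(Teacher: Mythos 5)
Your proposal is correct and follows essentially the same route the paper (implicitly) takes: the paper states this corollary without a written proof, relying on exactly the standard tail-integral identity $\mathbb{E}[\log_2(1+{\rm SIR_m})]=\int_0^\infty \mathbb{P}(\log_2(1+{\rm SIR_m})>t)\,{\rm d}t$ with the substitution $\gamma=2^t-1$, the identification $\bar{F}_{\rm m}=M_1^{\rm m}$ from \eqref{CCDF_SIR}, and the time-sharing factor $\eta$ for OMA. Your added remarks on Tonelli and on finiteness of the integral are harmless extras beyond what the paper records.
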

 \begin{corollary}
 Mean local delay of the typical IoT device under NOMA and OMA, respectively, are 
  \begin{align}
      &{\rm D_t(\beta_{\rm t})} = M^{\rm t}_{\rm -1}
     ~\text{and}~{\rm \tilde{D}_t(\beta_{\rm t})} =(1-\eta)^{-1} \tilde{M}^{\rm t}_{\rm -1}.\label{MLDOMA}
  \end{align}
 \end{corollary}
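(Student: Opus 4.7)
The plan is to obtain both expressions from the defining relation between mean local delay and the $(-1)$-th moment of the meta distribution, which is the standard interpretation used by Haenggi in \cite{haenggi2015meta}.

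First, I would fix the realization of the point processes $\Phi_{\rm b}, \Phi_{\rm m}, \Phi_{\rm t}$ and the link distances, so that the only randomness across retransmission slots comes from the Rayleigh fading coefficients $h_{\rm t}, \{h_{\mathbf{x}_{\rm m}}\}, \{h_{\mathbf{x}_{\rm t}}\}$. Under the usual high-mobility/fast-fading assumption employed throughout the meta distribution framework, these fading gains are drawn i.i.d. across slots while the geometric configuration stays frozen. Consequently, consecutive transmission attempts of the typical IoT device are conditionally independent Bernoulli trials with success parameter ${\rm P_t}(\beta_{\rm t};\mathbf{y},\Psi)$ as defined in \eqref{CSPt}. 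The number of trials until the first success is therefore geometric, with conditional mean $1/{\rm P_t}(\beta_{\rm t};\mathbf{y},\Psi)$.

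Next, I would de-condition on the spatial configuration. The mean local delay of the typical IoT device under NOMA is
\begin{equation*}
{\rm D_t}(\beta_{\rm t}) = \mathbb{E}\!\left[\frac{1}{{\rm P_t}(\beta_{\rm t};\mathbf{y},\Psi)}\right] = \mathbb{E}\!\left[{\rm P_t}(\beta_{\rm t};\mathbf{y},\Psi)^{-1}\right],
\end{equation*}
which is precisely the $b=-1$ moment of the meta distribution. Invoking Theorem~\ref{thm_IoT} with $b=-1$ yields ${\rm D_t}(\beta_{\rm t}) = M^{\rm t}_{-1}$ as claimed in \eqref{MLDOMA}.

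For the OMA case, the only additional ingredient is that each BS serves its associated IoT device in only a fraction $1-\eta$ of the time slots. Between two consecutive opportunities for the typical IoT device, there are on average $1/(1-\eta)$ physical slots. Scaling the conditional geometric mean by this factor and then de-conditioning gives
\begin{equation*}
{\rm \tilde{D}_t}(\beta_{\rm t}) = \frac{1}{1-\eta}\,\mathbb{E}\!\left[\tilde{\rm P}_{\rm t}(\beta_{\rm t};\mathbf{y},\Psi)^{-1}\right] = (1-\eta)^{-1}\tilde{M}^{\rm t}_{-1},
\end{equation*}
where $\tilde{\rm P}_{\rm t}$ denotes the OMA counterpart (no intra-cell interference since IoT slots are orthogonal), and Corollary~2 supplies its $(-1)$-th moment. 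The main subtlety worth flagging is ensuring convergence of $M^{\rm t}_{-1}$: because the conditional success probability can be arbitrarily close to $0$ under adverse interference geometries, the $(-1)$-th moment is typically larger than the positive-order moments, and in certain regimes of $(\alpha,\epsilon_{\rm t},\lambda_{\rm b})$ it may even diverge; this is however a property of the meta-distribution integral rather than of the present derivation, so I would simply note its existence as an implicit assumption on the parameters.
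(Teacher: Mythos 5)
Your proposal is correct and follows exactly the reasoning the paper implicitly relies on: the corollary is stated without an explicit proof, resting on the definition of mean local delay from \cite{haenggi2015meta}, under which conditioning on the geometry gives i.i.d.\ Bernoulli attempts, a conditional geometric delay $1/{\rm P_t}$, and hence $M^{\rm t}_{-1}$ after de-conditioning, with the $(1-\eta)^{-1}$ factor accounting for the reduced channel-access fraction in OMA. Your remark on the possible divergence of the $(-1)$-th moment is a valid caveat that the paper leaves implicit.
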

The optimal selection of power control  fractions $\epsilon_{\rm m}$ and $\epsilon_{\rm t}$ is crucial to maximize the ergodic rate for the mobile user.  However, maximizing the ergodic rate of the mobile user may negatively impact the mean local delay for the IoT device. Therefore, we consider maximizing the ergodic rate of the mobile user under the constraint of maximum mean local delay of the IoT device for NOMA and OMA cases as below 
{\begin{align}
{\mathcal P}_{\rm {NOMA}}:~~&\underset{(\epsilon_{\rm m},\epsilon_{\rm t})}{\max} \mathcal{R}_{\rm m}, ~~{\rm s.t }~~ {\rm D_t(\beta_{\rm t})}\leq \tau,\label{optRateNOMA}\\
{\mathcal P}_{\rm {OMA}}:~~&\underset{(\eta,\epsilon_{\rm m},\epsilon_{\rm t})}{\max}
\tilde{\mathcal{R}}_{\rm m}, ~~{\rm s.t. }~~{\rm \tilde{D}_t(\beta_{\rm t})} \leq \tau,\label{optRateOMA}
\end{align}}
where $\tau$ represents a predefined threshold.
{
Under the fixed-rate NOMA, the successful transmission of IoT device is conditioned on the successful decoding of the mobile device's signal. Thus, the fixed-rate NOMA will lead to an inferior mean local delay performance for the IoT device compared to the adaptive rate NOMA. 
As a result, the IoT device  requires smaller transmission power (and thus smaller  intra-cell interference to the mobile user) to ensure the mean local delay  is below threshold $\tau$  under the adaptive rate NOMA compared to the fixed-rate NOMA. Therefore,  the adaptive rate NOMA provides higher ergodic rate  compared to the   throughput achievable under the fixed rate NOMA.
}
\vspace{-3mm}
\section{Numerical Results and Discussions}\vspace{-1mm}
\begin{figure*}[ht!]
  \hspace{-2mm}\includegraphics[width=0.35\textwidth]{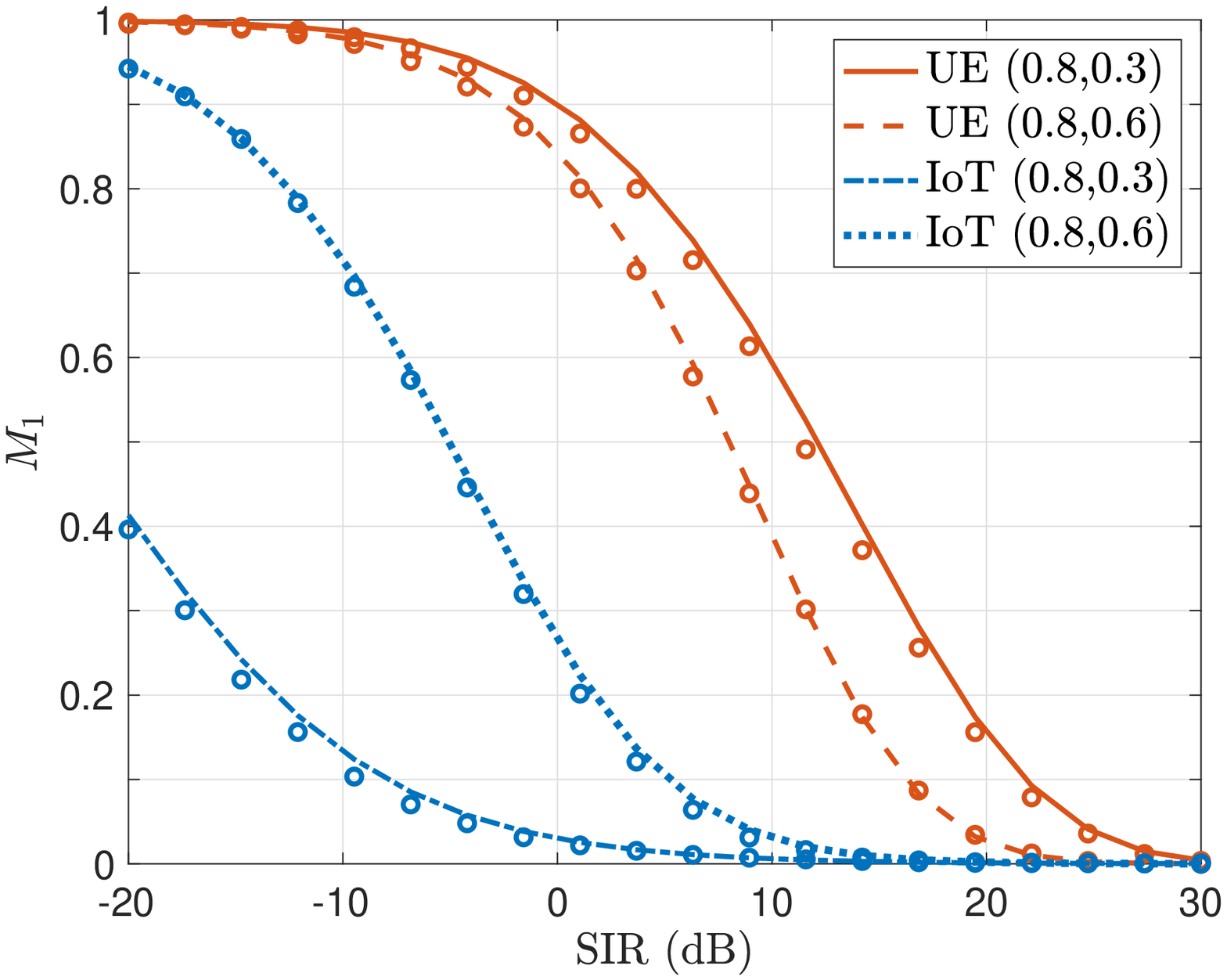}
    \hspace{-6mm}\includegraphics[width=0.35\textwidth]{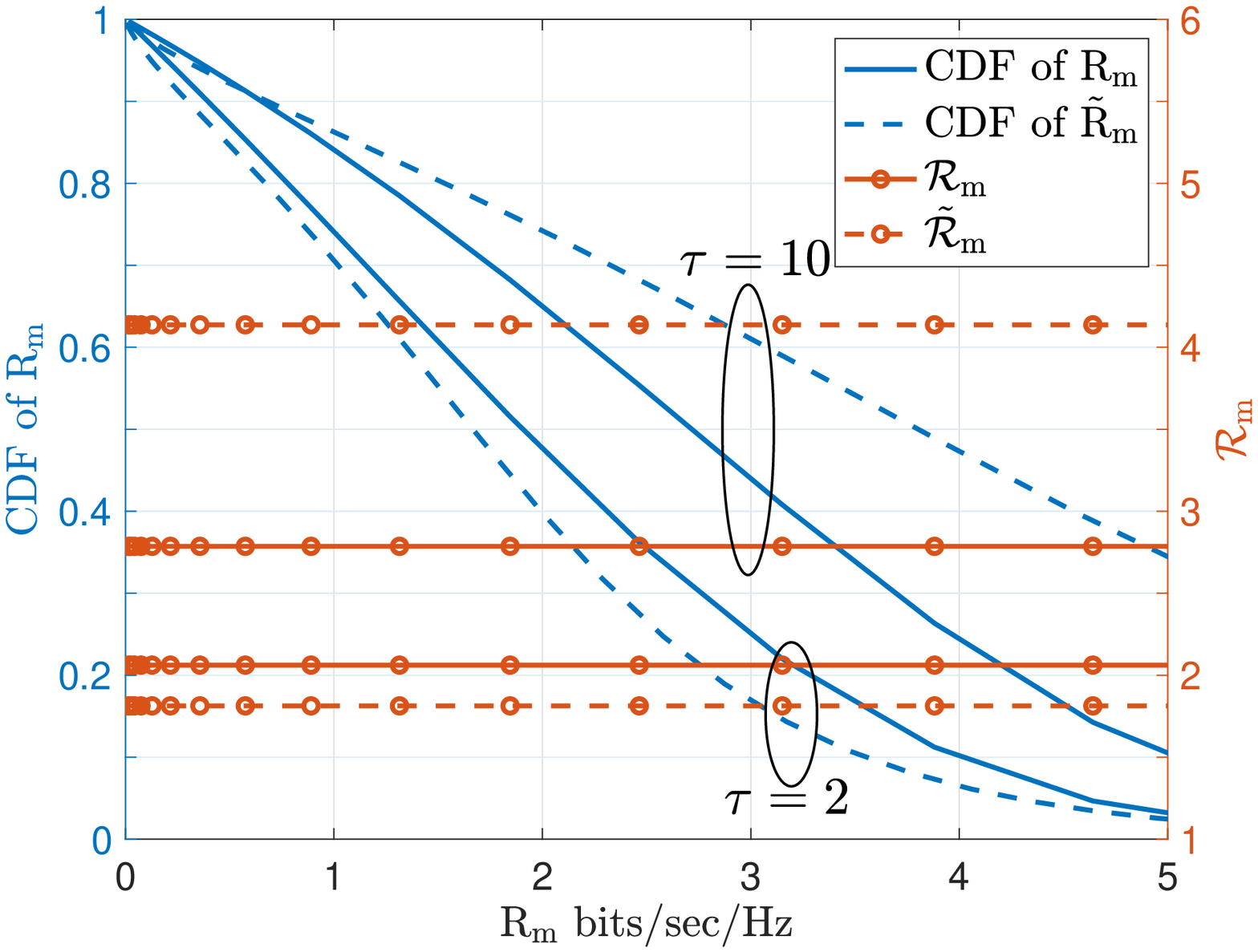}       
    \hspace{-2mm}\includegraphics[width=0.35\textwidth]{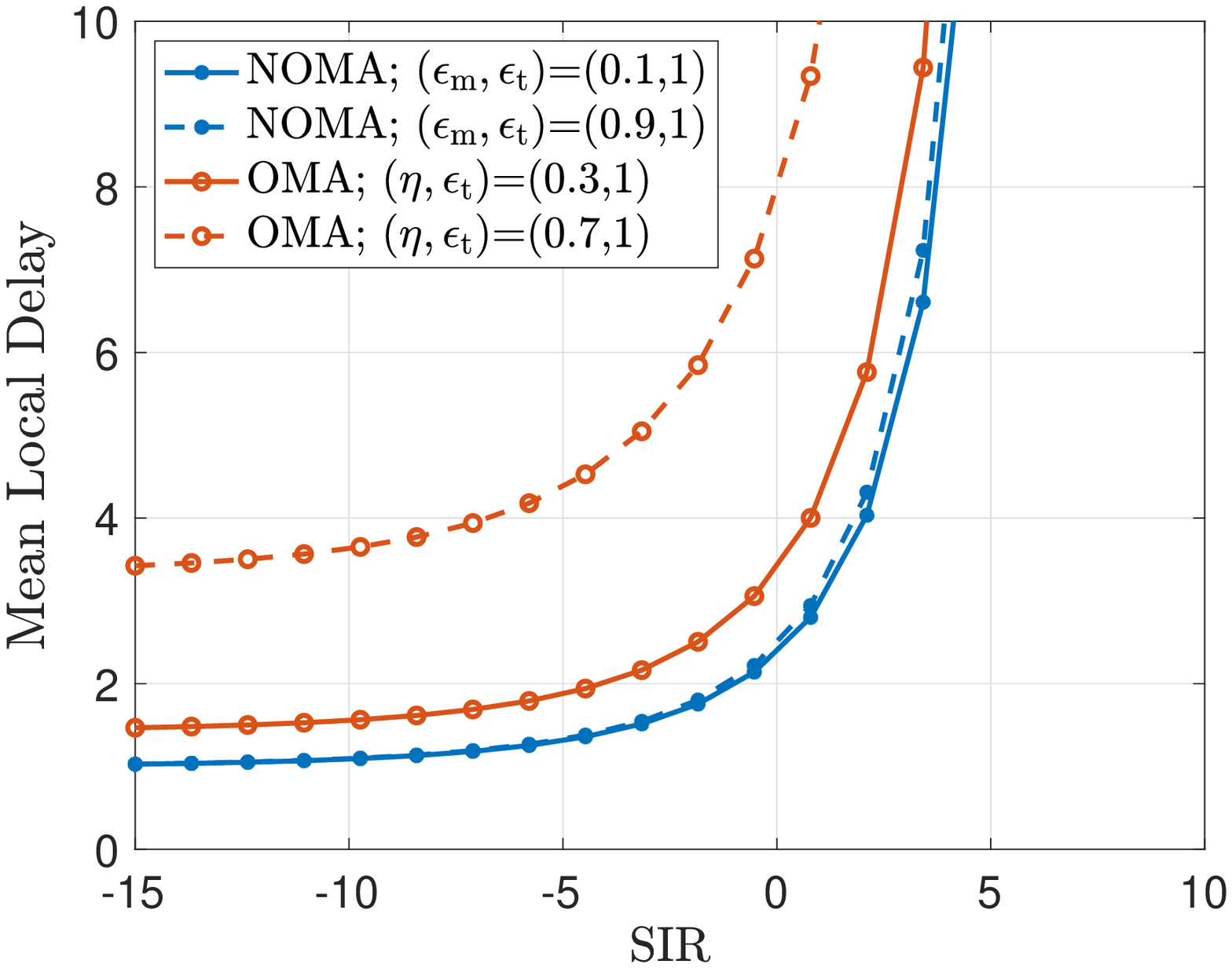}
      \vspace{-8mm}
\caption {Left: Verification of $M_1$ of both devices. The lines and markers correspond to simulation and analytical results, respectively. Middle: rate distribution and ergodic rate for optimally configured NOMA and OMA. Right: mean local delay of IoT devices for various power control fractions $(\epsilon_{\rm m}, \epsilon_{\rm t})$.}\vspace{-6mm}
\label{fig:Combined}
\end{figure*}
We consider $\lambda_{\rm b}=10^{-4}$, $\mathcal{A}_{\rm L}= 0.25$, $\alpha=4$, $\beta_{\rm t}= -5$ dB and $\rho_{\rm m} =\rho_{\rm t} = 1$, unless mentioned otherwise. 
Fig. \ref{fig:Combined} (left) verifies the accuracy of the first moment of meta distribution derived for the mobile users and  IoT devices under the adaptive rate NOMA for different values of $(\epsilon_{\rm m},\epsilon_{\rm t})$. The first moments of meta distribution of  mobile users  decreases and  IoT devices  increases with the increase in $\epsilon_{\rm t}$ for a given $\epsilon_{m}$.

{   We compare the proposed NOMA with the conventional OMA in terms of the rate distribution and optimal ergodic rate of  mobile users in Fig. \ref{fig:Combined} (middle) and the mean local delay of IoT devices in Fig. \ref{fig:Combined} (right).  Fig. \ref{fig:Combined} (middle) presents the rate distribution and ergodic rate for optimally configured NOMA and OMA. It is not surprising to see that the NOMA provides improved rate distribution compared of OMA for  $\tau=2$ (i.e., a strict delay constraint). This is because in OMA, the IoT device requires higher medium access probability  (i.e., $1-\eta$) to ensure its delay constraint when $\tau$ is small which allows smaller transmission times for mobile user. Whereas NOMA allows continuous medium access to mobile users with some interference from IoT devices. 
    Besides, the figure shows that NOMA underperforms for $\tau=10$ (i.e., a loose delay constraint).   
    This is because under OMA, the IoT device require smaller $1-\eta$ to ensure delay constraint for higher $\tau$ and thus it allows the mobile user to transmit more often.}


Fig. \ref{fig:Combined} (right) shows the mean local delay for the IoT device with full power control. It can be observed that the delay is better under NOMA compared to OMA. Besides, it is not sensitive to $\epsilon_{\rm m}$ since  SIC is always successful for the adaptive NOMA. However, the delay performance under OMA is very sensitive to $\eta$, which is expected.
The figure also shows that the mean local delay  degrades with the increase of ${\rm SIR}$ threshold $\beta{\rm_t}$ and also with the increase of $\epsilon_{\rm m}$ under NOMA and $\eta$ under OMA. It also demonstrates that for a given threshold $\tau$, NOMA can be configured such that it meets the delay constraint with a larger $\beta_{\rm t}$ compared to that under OMA case. This implies that NOMA can support a larger message size as compared to OMA under the same delay constraint. Besides, it also shows that the mean delay does not significantly change for a wide range of $\epsilon_{\rm m}$ under the NOMA whereas it drastically degrades with a moderate increase in $\eta$ under OMA.

{ Furthermore, it is expected that the ergodic rate under both NOMA and OMA degrades with the increase of $L$. This is because a larger JM cell accommodates more mobile users with lower ${\rm SIR_m}$s. The optimal fraction $\mathcal{A}_{\rm L}$ of mobile users involved in the non-orthogonal transmission with IoT devices depends on the network design parameters, such as  bandwidth partitioning for NOMA and non-NOMA users, scheduling policy, and load distributions of mobile  and IoT services. This investigation is a promising direction for future research.}
\vspace{-3mm}
\section{Conclusion}\vspace{-1mm}
 We proposed an  adaptive rate NOMA scheme for enabling massive access in cellular-supported IoT applications wherein an IoT device and a mobile user are paired for non-orthogonal transmission. The proposed {\em adaptive rate NOMA} assumes that the mobile users adapt their MCS according to the channel conditions whereas IoT devices transmit small size packets using fixed MCS. 
 Using stochastic geometry, we characterized the moments of the meta distribution for both types of devices, which are then used to characterize the ergodic rate for the typical mobile user and the mean local delay for the typical IoT device. 
Our results demonstrated that the adaptive rate NOMA provides better transmission rates for the mobile users as compared to the  OMA  under strict mean local delay constraint of IoT devices. This suggests that the proposed NOMA scheme is a spectrally-efficient solution for meeting  capacity and delay requirements of mobile users and IoT devices, respectively.


\appendix
\newcommand\myeq{\stackrel{\mathclap{\normalfont\mbox{(e)}}}{=}}


\newcommand\myeqa{\stackrel{\mathclap{\normalfont\mbox{(f)}}}{=}}

\newcommand\myeqb{\stackrel{\mathclap{\normalfont\mbox{(b)}}}{=}}
\newcommand\myeqc{\stackrel{\mathclap{\normalfont\mbox{(c)}}}{=}}
\hspace{-3mm}Letting $s_{\rm m}=\beta_{\rm m}\rho_{\rm m}^{-1}R_{\rm m}^{\alpha (1-\epsilon_{\rm m})}$, the conditional success probability of the  mobile user located at $\mathbf{y}_{\rm m}$ can be obtained as
\begin{align*}
        &{\rm P_{\rm m}(\beta_{\rm m};\mathbf{y},{\rm \Psi})}= \mathbb{P}\left(h_{\rm m} > s_{\rm m}\left(\rho_{\rm t}h_{\rm t}R_{\rm t}^{\alpha (\epsilon_{\rm t} -1) } +I_{\rm m} + I_{\rm t}\right)\hspace{0cm}|\mathbf{y},\Psi \right)\\
        &\stackrel{(a)}{=}\prod_{\mathbf{x}\in{\rm \Psi_m}}\frac{{1+s_{\rm m}\rho_{\rm t}R_{\rm t}^{\alpha (\epsilon_{\rm t} -1) }}}{1+s_{\rm m}\rho_{\rm m}R_{\mathbf x_{\rm m}}^{\alpha\epsilon_{\rm m}}D_{\mathbf x_{\rm m}}^{-\alpha}}
        \prod_{\mathbf{x}\in{\rm \Psi_t}}\frac{1}{1+s_{\rm m}\rho_{\rm t}R_{\mathbf x_{\rm t}}^{\alpha\epsilon_{\rm t}}D_{\mathbf x_{\rm t}}^{-\alpha}},
\end{align*}
where (a) follows since $h_{\rm m}, h_{\rm t},h_{\mathbf{x}_{\rm m}}$ and $h_{\mathbf{x}_{\rm t}} \sim \exp(1)$. Since $R_{\mathbf{x}_{\rm m}}\leq D_{\mathbf{x}_{\rm m}}$, for $\mathbf{x}\in\Psi_{\rm m}$, ${\rm pdf}$ of $R_{\mathbf{x}_{\rm m}}$ can be truncated as \vspace{-0.07cm}
\begin{equation}
    f_{R_{\mathbf{x}_{\rm m}}}(r|D_{\mathbf x_{\rm m}}) = \frac{2\pi\rho\lambda_{\rm b}r \exp\left(-\pi\rho\lambda_{\rm b}r^2\right)}{1-\exp(-\pi\rho\lambda_{\rm b}{\rm min}(L,D_{\mathbf x_{\rm m}})^2)}.\label{pdf_Rxm_Dxm}
\end{equation} 
Besides, $R_{\mathbf{x}_{\rm t}}\leq D_{\mathbf{x}_{\rm t}}$. Thus, the ${\rm pdf}$  of $R_{\mathbf x_{\rm t}}$ becomes 
\begin{equation}
    f_{R_{\mathbf{x}_{\rm t}}}(r|D_{\mathbf x_{\rm t}}) = \frac{2\pi\rho\lambda_{\rm b}r \exp(-\pi\rho\lambda_{\rm b}r^2)}{1-\exp(-\pi\rho\lambda_{\rm b}D_{\mathbf x_{\rm t}}^2)}, 0\leq r \leq D_{\mathbf x_{\rm t}}.\label{pdf_Rxt_Dxt}
\end{equation}
 The $b$-th moment of ${\rm P_m}(\beta_{\rm m};\mathbf{y},\Psi)$ can be obtained as\vspace{-1mm}
 \begin{align*}
     M^{\rm m}_{\rm b} &= \mathbb{E}_{R_{\rm m}}\Bigg[\mathbb{E}_{R_{\rm t}}\hspace{0cm}\left[\left(1+s_{\rm m}\rho_{\rm t}R_{\rm t}^{\alpha (\epsilon_{\rm t} -1) }\right)^{-b}\right]\vspace{-5mm}\\
     &\hspace{1cm} \mathbb{E}_{{\rm \Psi_m},R_{\mathbf{x}_{\rm m}}}\left[\underset{\mathbf{x}\in{\rm \Psi_m}}{\prod}\frac{1}{(1+s_{\rm m}\rho_{\rm m}R_{\mathbf x_{\rm m}}^{\alpha\epsilon_{\rm m}}D_{\mathbf x_{\rm m}}^{-\alpha})^b}\right]\\
     &\hspace{1cm}\mathbb{E}_{{\rm \Psi_t},R_{\mathbf{x}_{\rm t}}}\bigg[\hspace{1mm}\underset{\mathbf{x}\in{\rm \Psi_t}}{\prod}\frac{1}{(1+s_{\rm m}\rho_{\rm t}R_{\mathbf x_{\rm t}}^{\alpha\epsilon_{\rm t}}D_{\mathbf x_{\rm t}}^{-\alpha})^b}\bigg]\Bigg]\vspace{-5mm}\\
     &= \mathbb{E}_{R_{\rm m}}\Bigg[\mathbb{E}_{R_{\rm t}}\hspace{0cm}\left[\left(1+s_{\rm m}\rho_{\rm t}R_{\rm t}^{\alpha (\epsilon_{\rm t} -1) }\right)^{-b}\right]\\
     &\hspace{7mm}\mathbb{E}_{\Psi_{\rm m}}\hspace{-1mm} \left[\prod_{\mathbf{x}\in\Psi_{\rm m}}\hspace{0cm} \hspace{-1mm}\int_0^{\min(D_{\mathbf{x}_{\rm m}},L)}\hspace{-1.3cm}\vspace{0.1cm}\hspace{-2mm}\frac{1}{(1\hspace{-1mm}+\hspace{-1mm}{\rho_{\rm m}s _{\rm m}r^{\alpha \epsilon_{\rm m} }}D_{\mathbf x_{\rm m}}^{-\alpha})^b}f_{R_{\mathbf x_{\rm m}}}(r\vert D_{\mathbf x_{\rm m}}){\rm d}r\right]\\
     &\hspace{7mm}\mathbb{E}_{\Psi_{\rm t}}\hspace{-1mm} \left[\prod_{\mathbf x\in\Psi_{\rm t}}\int_0^{D_{\mathbf{x}_{\rm t}}}\hspace{-.3cm} \frac{1}{(1\hspace{-1mm}+\hspace{-1mm}{\rho_{\rm t}s _{\rm m}r^{\alpha \epsilon_{\rm t} }}D_{\mathbf x_{\rm t}}^{-\alpha})^b}f_{R_{\mathbf x_{\rm t}}}(r\vert D_{\mathbf x_{\rm t}}){\rm d}r\right]\Bigg].
     \end{align*}
      Next, using  conditional ${\rm pdf}$s of $R_{\mathbf{x}_{\rm m}}$ and $R_{\mathbf{x}_{\rm t}}$ (given in \eqref{pdf_Rxm_Dxm} and \eqref{pdf_Rxt_Dxt}), and the probability generating functional of  approximate non-homogeneous PPPs $\Psi_{\rm m}$ and $\Psi_{\rm t}$ with densities $\tilde{\lambda}_{\rm m}(r)$ and $\tilde{\lambda}_{\rm t}(r)$ (given in \eqref{pcf_Mobile} and \eqref{pcf_Iot}), we get \eqref{Mbm}.\vspace{-3mm}

\end{document}